\newtheorem{Theorem}{Theorem}
\newtheorem{remark}{Remark}
\newtheorem{definition}{Definition}
\newtheorem{Example}{Example}
\newcommand{\D}{{\bf D}}
\newcommand{\E}{{\bf E}}
\newcommand{\R}{{\mathbb R}}
\newcommand{\B}{{\bf B}}
\numberwithin{equation}{section}
\begin{document}

\title{{Biquaternionic Reformulation of a Fractional Monochromatic Maxwell System}}

\author{Yudier Pe\~na P\'erez$^{a_1}$, Ricardo Abreu Blaya$^{b_1}$,\\ Mart\'in Patricio \'Arciga Alejandre$^{b_2}$, Juan Bory Reyes$^{a_2}$.}
\date {\small{{$^a${SEPI-ESIME-ZAC-Instituto Polit\'ecnico Nacional, Ciudad M\'exico, M\'exico.\\
$^{a_1}${e-mail: ypenap88@gmail.com}, $^{a_2}${e-mail: juanboryreyes@yahoo.com}\\
$^b$Facultad de Matem\'atica, Universidad Aut\'onoma de Guerrero, Chilpancingo, Guerrero, M\'exico.\\
{$^{b_1}${e-mail: rabreublaya@yahoo.es} $^{b_2}$e-mail: mparciga@gmail.com}\\
}}}}

\maketitle

\begin{abstract} 
In this work we propose a biquaternionic reformulation of a fractional monochromatic Maxwell system. Additionally, some examples are given to illustrate how the quaternionic fractional approach emerges in linear hydrodynamic and elasticity. 

\vspace{0.3cm}

\small{
\noindent
\textbf{Keywords.} Caputo fractional derivative, Non-local Maxwell system, Fractional Dirac operator.\\
\noindent
\textbf{Mathematics Subject Classification.} 26A33, 78A25, 28A10.}  
\end{abstract}

\section{Introduction} 
The past few decades have witnessed a surge of interest in research on the theory of the Maxwell system. A technique to study the Maxwell system is to reduce it to the equivalent Helmholtz equation. In a series of recent papers diverse applications of the Maxwell system theory have been studied, see \cite{AAK, NA48, MSB} for more details.

The Dirac equation is an important one in mathematical physics used to represent the Maxwell system through several ways, which has attracted the attention of physicists and engineers, see \cite{VR}.

A new approach for the study of the Maxwell system by using the quaternionic displaced Dirac operator, rather than working directly with the Helmholtz equation, appeared recently.

The quaternionic analysis gives a tool of wider applicability for the study of electromagnetic problems. In particular, a quaternionic hyperholomorphic approach to monochromatic solutions of the Maxwell system is established in \cite{KS, A22}.

The fractional calculus goes back to Leibniz, Liouville, Grunwald, Letnikov and Riemann. There are many interesting books on this topic as well as in fractional differential equations, see e.g. \cite{OS, Samko, Miller, Podlubny, Kilbas, Her}. 

Nowadays, the fractional calculus is a progressive research area \cite{Baleanu1,Napoles2}. Among all the subjects, we mention the treatment of fractional differential equations regarding the mathematical methods of their solutions and their applications in physics, chemistry, engineering, optics and quantum mechanics. For more details we refer the reader to \cite{Gorenflo, Gorenflo2, Kilbas, Her, Miller, Podlubny, Zhou}.

The fractional derivative operators are non-local and this property is very important because it allows modeling the dynamic of many complex processes in applied sciences and engineering, see \cite{Bonilla, Paola}. For example, the fractional non-local Maxwell system and the corresponding fractional wave equations are considered in \cite{Tarasov2008,Baleanu,Trujillo1}.

Recently, Ferreira and Vieira \cite{Ferreira1} proposed a fractional Laplace and Dirac operator in $3$-dimensional space using Caputo derivatives with different orders for each direction. Previous approaches, but using Riemann-Liouville derivatives can be found in \cite{Yakubovich, Ferreira2}.

The main goal of this paper is to describe the very close connection between the $3$-parameter quaternionic displaced fractional Dirac operator using Caputo derivatives and a fractional monochromatic Maxwell system. 

After this brief introduction let us give a description of the sections of this paper. Section $2$ contains some basic and necessary facts about fractional calculus, fractional vector calculus and the connections between quaternionic analysis and fractional calculus. In Section $3$, we present some examples of fractional systems in Physics. Finally, Section $4$ is devoted to the study of a fractional monochromatic Maxwell system and summarize the main achievements of this study.

\section{Preliminaries}
In this section we introduce the fractional derivatives and integrals necessary for our purpose and review some standard facts on fractional vector calculus and basic definitions of quaternionic analysis.
 
\subsection{Fractional derivatives and integrals}
Definitions and results of fractional calculus are established in this subsection, see \cite{Samko, Kilbas, Podlubny}.
\begin{definition}[\cite{Samko}]
Let a real-valued function $f(x)\in L_1[a,b]$. The left Riemann-Liouville fractional integral of order $\alpha_1>0$ is given by
\begin{equation*}\label{IntegralFraccionaria}
\left(_aI^{\alpha_1}_{x}f\right)(x):=\frac{1}{\Gamma(\alpha_1)}\int_{a}^{x}\frac{f(\tau)}{(x-\tau)^{1-\alpha_1}}d\tau, \,\,\,\ x>a.
\end{equation*}
\end{definition}

\begin{definition}[\cite{Samko}]
The left Caputo fractional derivative of order $\alpha_1>0$ for $f(x)\in AC^1[a,b]$ is written as
\begin{equation*}\label{DerivadaCaputo2}
\left(_a^C\hspace{-0.05cm}D^{\alpha_1}_{x}f\right)(x):= \frac{1}{\Gamma(1-\alpha_1)}\int_{a}^{x}\frac{f'(\tau)}{(x-\tau)^{\alpha_1}}d\tau, \,\,\,\ 0<\alpha_1<1.
\end{equation*}
\end{definition}

Here and subsequently, $AC^1[a,b]$ denotes the class of continuously differentiable functions $f$ which are absolutely continuous on $[a,b]$.

It is easily seen that
\begin{equation}\label{DerivadaCaputo3}
\left(_a^C\hspace{-0.05cm}D^{\alpha_1}_{x}f\right)(x)= \left(_aI^{1-\alpha_1}_{x}f'\right)(x).
\end{equation}
Unfortunately, in general the semi-group property for the composition of Caputo fractional derivatives is not true. Conditions under which the law of exponents holds is established in the next theorem, which follows the main ideas proposed in \cite {Podlubny}.

\begin{Theorem}\label{th2}
Let $\alpha_1, \alpha_2\in(0,1]$ such that $\alpha_1+\alpha_2>1$ and $f\in C^{2}[a,b]$. Then
\begin{equation}\label{Semigrupo} 
\left(_a^C\hspace{-0.05cm}D^{\alpha_1}_{x} \hspace{0.1cm} _a^C\hspace{-0.05cm}D^{\alpha_2}_{x}f\right)(x)= \hspace{0.1cm}\left(_a^C\hspace{-0.05cm}D^{\alpha_1+\alpha_2}_{x}f\right)(x)
\end{equation}
holds if the function $f$ satisfies the condition 
\begin{equation}\label{Condition}
f'(a)=0.
\end{equation} 
\end{Theorem}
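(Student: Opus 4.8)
\medskip
\noindent
The idea is to translate the composition of Caputo derivatives into a composition of Riemann--Liouville fractional integrals via \eqref{DerivadaCaputo3}, because for the latter the law of exponents ${}_aI^{\mu}_{x}\,{}_aI^{\nu}_{x}={}_aI^{\mu+\nu}_{x}$ holds with no restriction on $\mu,\nu\geq 0$ (with ${}_aI^{0}_{x}$ the identity). Set $g:=\left({}_a^{C}D^{\alpha_2}_{x}f\right)$. By \eqref{DerivadaCaputo3}, $g={}_aI^{1-\alpha_2}_{x}f'$, and since $f\in C^{2}[a,b]$ we have $f'\in C^{1}[a,b]$; applying \eqref{DerivadaCaputo3} once more to the outer operator gives $\left({}_a^{C}D^{\alpha_1}_{x}g\right)={}_aI^{1-\alpha_1}_{x}g'$. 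So everything reduces to computing $g'$.

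The central step is the differentiation of $g={}_aI^{1-\alpha_2}_{x}f'$. Performing the change of variables $\sigma=x-\tau$ in
\[
g(x)=\frac{1}{\Gamma(1-\alpha_2)}\int_{a}^{x}(x-\tau)^{-\alpha_2}f'(\tau)\,d\tau=\frac{1}{\Gamma(1-\alpha_2)}\int_{0}^{x-a}\sigma^{-\alpha_2}f'(x-\sigma)\,d\sigma,
\]
and differentiating under the integral sign (legitimate since $\sigma^{-\alpha_2}|f''(x-\sigma)|\leq \sigma^{-\alpha_2}\|f''\|_{\infty}$ is integrable near $\sigma=0$), one picks up a boundary term plus a shifted fractional integral,
\[
g'(x)=\frac{f'(a)}{\Gamma(1-\alpha_2)}\,(x-a)^{-\alpha_2}+\left({}_aI^{1-\alpha_2}_{x}f''\right)(x).
\]
This is exactly where hypothesis \eqref{Condition} is used: the assumption $f'(a)=0$ annihilates the singular boundary term, leaving $g'={}_aI^{1-\alpha_2}_{x}f''$, a continuous function on $[a,b]$.

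Substituting back and invoking the (unrestricted) semigroup property of Riemann--Liouville integrals,
\[
\left({}_a^{C}D^{\alpha_1}_{x}\,{}_a^{C}D^{\alpha_2}_{x}f\right)(x)={}_aI^{1-\alpha_1}_{x}g'={}_aI^{1-\alpha_1}_{x}\,{}_aI^{1-\alpha_2}_{x}f''={}_aI^{\,2-\alpha_1-\alpha_2}_{x}f''.
\]
Since $\alpha_1+\alpha_2>1$, the order $\alpha_1+\alpha_2$ lies in $(1,2]$, and by the definition of the Caputo derivative of such an order one has $\left({}_a^{C}D^{\alpha_1+\alpha_2}_{x}f\right)={}_aI^{\,2-\alpha_1-\alpha_2}_{x}f''$ (equivalently, ${}_a^{C}D^{\alpha_1+\alpha_2}_{x}f={}_aI^{1-(\alpha_1+\alpha_2-1)}_{x}(f')'$ by \eqref{DerivadaCaputo3}); comparing the last two displays gives \eqref{Semigrupo}. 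The one delicate point is the differentiation-under-the-integral step producing the term $\tfrac{f'(a)}{\Gamma(1-\alpha_2)}(x-a)^{-\alpha_2}$: this is the mechanism that forces the condition $f'(a)=0$ and is also where the regularity $f\in C^{2}$ enters; the remaining manipulations are routine bookkeeping with standard Riemann--Liouville identities.
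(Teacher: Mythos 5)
Your proposal is correct and follows essentially the same route as the paper: both reduce the composition via (\ref{DerivadaCaputo3}) to ${}_aI^{1-\alpha_1}_{x}\bigl({}_a^{C}D^{\alpha_2}_{x}f\bigr)'$, identify $\bigl({}_a^{C}D^{\alpha_2}_{x}f\bigr)'={}_aI^{1-\alpha_2}_{x}f''$ using the hypothesis $f'(a)=0$, and conclude with the semigroup property of Riemann--Liouville integrals. The only difference is that you derive the key intermediate identity explicitly by differentiating under the integral sign (making visible the singular boundary term $\frac{f'(a)}{\Gamma(1-\alpha_2)}(x-a)^{-\alpha_2}$ that the condition (\ref{Condition}) annihilates), whereas the paper cites this fact from Podlubny.
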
   
 
\begin{proof}
Applying (\ref{DerivadaCaputo3}) yields 
\begin{equation*}\label{Semigrupo1}
\left(_a^C\hspace{-0.05cm}D^{\alpha_1}_{x} \hspace{0.1cm} _a^C\hspace{-0.05cm}D^{\alpha_2}_{x}f\right)(x) = \hspace{0.1cm}\left(_aI^{1-\alpha_1}_{x}(_a^C\hspace{-0.05cm}D^{\alpha_2}_{x}f)'\right)(x).
\end{equation*} 
From \cite{Podlubny} (p. 81) and (\ref{Condition}), it follows that 
\begin{eqnarray*}\label{Semigrupo2}
(_a^C\hspace{-0.05cm}D^{\alpha_2}_{x}f)'(x)&=& \left(_a^C\hspace{-0.05cm}D^{1+\alpha_2}_{x}f\right)(x).\end{eqnarray*} 
Consequently, 
\begin{equation*}\label{Semigrupo3}
\left(_a^C\hspace{-0.05cm}D^{\alpha_1}_{x} \hspace{0.1cm} _a^C\hspace{-0.05cm}D^{\alpha_2}_{x}f\right)(x)= \left(_aI^{1-\alpha_1}_{x}\hspace{0.1cm} {}_a^C\hspace{-0.05cm}D^{1+\alpha_2}_{x}f\right)(x).
\end{equation*} 
But $1+\alpha_2<2$, then
\begin{eqnarray*}\label{Semigrupo4}
\left(_a^C\hspace{-0.05cm}D^{\alpha_1}_{x} \hspace{0.1cm} _a^C\hspace{-0.05cm}D^{\alpha_2}_{x}f\right)(x)&=& \left(_aI^{1-\alpha_1}_{x}\hspace{0.1cm}_aI^{2-(1+\alpha_2)}_{x}f''\right)(x)\\\nonumber&=&\left(_aI^{1-\alpha_1}_{x}\hspace{0.1cm}_aI^{1-\alpha_2}_{x}f''\right)(x)\\\nonumber&=&\left(_aI^{2-(\alpha_1+\alpha_2)}_{x}f''\right)(x)\\\nonumber&=&(_a^C\hspace{-0.05cm}D^{\alpha_1+\alpha_2}_{x}f)(x).
\end{eqnarray*} 
\end{proof}

\subsection{Elements of quaternionic functions}

We follow Kravchenko \cite{A22} in assert that: \textit{“The whole building which the equations of mathematical physics inhabit can be erected on the foundations of quaternionic analysis, and this possibility represents some interest due to the lightness and transparency especially of the highest floors of that new building as well as due to high speed horizontal (apart from the vertical) movement allowing an extremely valuable communication between its different parts. Nevertheless the current major interest may be the tools of quaternionic analysis which permit results to be obtained where other more traditional methods apparently fail”}.

Let $\mathbb H(\R)$ be the skew field of real quaternions and let $e_0=1,e_1, e_2, e_3$ be the quaternion units that fulfill the condition  
\[e_me_n+e_ne_m=-2\delta_{mn},\;m,n=1,2,3\]
\[e_1e_2=e_3;\;e_2e_3=e_1;\;e_3e_1=e_2.\]

Let $q=q_0+\vec{q}=\sum_{n=0}^3q_ne_n$, where $q_0 =: Sc(q)$ is called scalar part and $\vec{q}=: Vec(q)$ is called vector part of the quaternion $q$. The conjugate element $\bar{q}$ is given by $\bar{q}=q_0-\vec{q}$. If $Sc(q)=0$ then $q=\vec{q}$ is called a purely vectorial quaternion and it is identified with a vector $\vec{q}=(q_1,q_2,q_3)$ from $\R^3$.

The multiplication of two quaternions $p, q$ can be rewritten in vector terms:
\[pq=p_0q_0-\vec{p}\cdot\vec{q}+p_0\vec{q}+q_0\vec{p}+\vec{p}\times\vec{q},\]
where $\vec{p}\cdot\vec{q}$ and $\vec{p}\times\vec{q}$ are the scalar and the usual cross product in ${\mathbb R^3}$ respectively.

A $\mathbb H(\R)$-valued function $U$ defined in $\Omega\subset\R^3$ has the representations $U=U_{0}+\vec{U}=\sum_{n=0}^{3}U_ne_n$ with $U_n$ real valued. Properties such as continuity or differentiability have to be understood component wise. 

Let us denote by $\mathbb H(\mathbb C)$ the set of quaternions with complex components instead of real (complex quaternions).

If $q\in\mathbb H(\mathbb{C})$, then $q={\rm Re}\,q+i\,{\rm Im}\,q$, where $i$ is the complex imaginary unit and ${\rm Re}\,q=\sum_{n=0}^3{\rm Re}\,q_ne_n$, ${\rm Im}\,q=\sum_{n=0}^3{\rm Im}\, q_ne_n$ belong to $\mathbb H(\R)$. 

The following first order partial differential operator is called Dirac operator:
\[\mathcal D:=\sum_{n=1}^{3}e_{n}\partial^{1}_{x_n},\]
where $\partial^{1}_{\tau}$ denotes the partial derivative with respect to $\tau$.

Because $-{\mathcal D}{\mathcal D}=\Delta$, Laplacian in $\R^3$, any function which belongs to $\mbox{ker}\,\mathcal D$ is also harmonic.

The Helmholtz operator $\Delta+\kappa^2$ ($\kappa\in\mathbb{C}$) can be factorized as
\begin{equation*}\label{fac}
-(\mathcal D-\kappa)(\mathcal D+\kappa)=\Delta+\kappa^2,
\end{equation*}
as will be clear later, physically $\kappa$ represents the wave number.

For a ${\mathbb H(\mathbb{C})}$-valued function $U$, the displacements of $\mathcal D$ are denoted by  
\[\mathcal D_\kappa\,U:=\mathcal D\,U\mp\kappa\,U=0.\]
The interested reader is referred to \cite{KS, A22} for further information.

\subsection{Fractional vector operations}
In past decades, there has been considerable effort in literature to study boundary problems of pure mathematics and mathematical physics for domains with highly irregular boundaries like non-rectifiable, finite perimeter, fractals and flat chains, see for instance \cite{Borodich} and the references given there. 

In $1992$ Harrison and Norton \cite{HN} presented an approach to the divergence theorem for domains with boundaries of non-integer box dimension. One of the method they employed was the technique introduced by Whitney in \cite{Whitney}, of decomposition of the domain into cubes and extension of functions defined on a closed set to functions defined on the whole of $\mathbb R^3$ (for details in the construction of the Whitney decomposition, we refer to \cite{Stein}). These techniques were also employed by \cite{Aikawa} where an example of uniform domains is given by an open ball minus the centers of Whitney cubes.

Let $W:=\left\{\vec{x}=(x_1, x_2, x_3): a\leq x_1\leq b, \,\,\,\,a\leq x_2\leq b,\,\,\,\,a\leq x_3\leq b\right\}$ be a cube of $\mathbb R^3$. 

The fractional Nabla operator in coordinates $(x_1, x_2, x_3)$ and the quaternionic units $(e_1, e_2, e_3)$ is written as
\begin{equation*}\label{Nabla}
\nabla_W^{\vec{\alpha}}:= e_1{}^C{}\hspace{-0.05cm}D^{\frac{1+\alpha_1}{2}}_{W}[x_1]+ e_2{}^C\hspace{-0.05cm}D^{\frac{1+\alpha_2}{2}}_{W}[x_2]+ e_3{} ^C\hspace{-0.05cm}D^{\frac{1+\alpha_3}{2}}_{W}[x_3], 
\end{equation*}
where $^C{}\hspace{-0.05cm}D^{\frac{1+\alpha_n}{2}}_{W}[x_n]=\,_a^C\hspace{-0.05cm}D^{\frac{1+\alpha_n}{2}}_{x_n}(x_n)$ denotes the left Caputo fractional derivatives with respect to coordinates $x_n$. Here and subsequently $\vec{\alpha}$ stands for the vector $(\alpha_1, \alpha_2, \alpha_3)$ and $0<\alpha_n\leq 1,\,\ n=1,2,3$.

Following the ideas of \cite{Tarasov2008}, we may define the fractional differential operators over cubes $W$ in quaternionic context.

Let $U: W\longrightarrow \mathbb H(\R)$ such that $U_0, U_n \in AC^1[W]$, where $AC^1[W]$ denotes the class of functions such that its respective restrictions to each of the coordinate axes belongs to $AC^1[a,b]$.

\begin{itemize}
\item[(1)] If $U_0=U_0(\vec{x})$, we define its fractional gradient as 
\begin{eqnarray*}\label{Gradiente}
\mbox{Grad}_W^{\vec{\alpha}}U_0&:=&\nabla_W^{\vec{\alpha}}U_0= e_1{}^C{}\hspace{-0.05cm}D^{\frac{1+\alpha_1}{2}}_{W}[x_1]U_0+e_2{} ^C\hspace{-0.05cm}D^{\frac{1+\alpha_2}{2}}_{W}[x_2]U_0+\\&+& e_3{} ^C\hspace{-0.05cm}D^{\frac{1+\alpha_3}{2}}_{W}[x_3]U_0.
\end{eqnarray*}
\item[(2)] If $\vec{U}=\vec{U}(\vec{x})$, then we define its fractional divergence by
\begin{eqnarray}\label{Divergencia}
\mbox{Div}_W^{\vec{\alpha}}\vec{U}&:=&\nabla_W^{\vec{\alpha}}\cdot\vec{U}=\hspace{-0.05cm}^C{}\hspace{-0.05cm}D^{\frac{1+\alpha_1}{2}}_{W}[x_1]U_1+\hspace{0.1cm}^C{}\hspace{-0.05cm}D^{\frac{1+\alpha_2}{2}}_{W}[x_2]U_2+\nonumber\\&+&^C{}\hspace{-0.05cm}D^{\frac{1+\alpha_3}{2}}_{W}[x_3]U_3.
\end{eqnarray}
\item[(3)] The fractional curl operator is defined by
\begin{eqnarray}\label{Rotacional}
\mbox{Curl}_W^{\vec{\alpha}}\vec{U}&:=&\nabla_W^{\vec{\alpha}}\times\vec{U}= e_1\left(^C{}\hspace{-0.05cm}D^{\frac{1+\alpha_2}{2}}_{W}[x_2]U_3-\hspace{0.1cm}^C{}\hspace{-0.05cm}D^{\frac{1+\alpha_3}{2}}_{W}[x_3]U_2\right)\nonumber\\&+&e_2\left(^C{}\hspace{-0.05cm}D^{\frac{1+\alpha_3}{2}}_{W}[x_3]U_1-\hspace{0.1cm}^C{}\hspace{-0.05cm}D^{\frac{1+\alpha_1}{2}}_{W}[x_1]U_3\right)\nonumber\\&+&e_3\left(^C{}\hspace{-0.05cm}D^{\frac{1+\alpha_1}{2}}_{W}[x_1]U_2-\hspace{0.1cm}^C{}\hspace{-0.05cm}D^{\frac{1+\alpha_2}{2}}_{W}[x_2]U_1\right).
\end{eqnarray}
\end{itemize}
Note that these fractional differential operators are non-local and depend on the $W$ cube.

The following relation for fractional differential vector operations is easily adapted from \cite{Tarasov2008}.
\begin{equation}\label{divcurl}
\mbox{Div}_W^{\vec{\alpha}}(\mbox{Curl}_W^{\vec{\alpha}}\vec{U})= 0. 
\end{equation}

A definition of the $3$-parameter fractional Laplace and Dirac operators using left Caputo derivatives can be found in \cite{Ferreira1}.
\begin{equation*}\label{FractionalLaplace}
^C\hspace{-0.05cm}\Delta^{\vec{\alpha}}_{W} := \,^C\hspace{-0.05cm}D^{1+\alpha_1}_{W}[x_1] + \,^C\hspace{-0.05cm}D^{1+\alpha_2}_{W}[x_2] + \,^C\hspace{-0.05cm}D^{1+\alpha_3}_{W}[x_3],
\end{equation*}

\begin{equation*}\label{FractionalDirac}
^C\hspace{-0.03cm}\mathcal D^{\vec{\alpha}}_{W} := e_1 \,^C\hspace{-0.05cm}D^{\frac{1+\alpha_1}{2}}_{W}[x_1]+e_2 \,^C\hspace{-0.05cm}D^{\frac{1+\alpha_2}{2}}_{W}[x_2]+e_3 \,^C\hspace{-0.05cm}D^{\frac{1+\alpha_3}{2}}_{W}[x_3].
\end{equation*}

The fractional Dirac operator $^C\hspace{-0.03cm}\mathcal D^{\vec{\alpha}}_{W}$ factorizes the fractional Laplace operator $^C\hspace{-0.05cm}\Delta^{\vec{\alpha}}_{W}$ for any ${\mathbb H(\mathbb{C})}$-valued function $U={\rm Re}\, U+i\,{\rm Im}\, U$, whenever the components of the functions ${\rm Re}\,U, {\rm Im}\,U$ (its respective restrictions to each of the coordinate axes) satisfying the sufficient conditions presented in Theorem \ref{th2}. As a matter of fact, for such functions we can apply (\ref{Semigrupo}) which together with the multiplication rules of the quaternion algebra and based upon ideas found in \cite[Section 4]{Ferreira1} gives
\begin{equation}\label{Factorizacion}
-{^C\hspace{-0.03cm}\mathcal D^{\vec{\alpha}}_{W}}\left(^C\hspace{-0.03cm}\mathcal D^{\vec{\alpha}}_{W}U\right)=^C\hspace{-0.2cm}\Delta^{\vec{\alpha}}_{W}U.
\end{equation}
We can now state (paraphrasing the Dirac operator case) the fact that the solution of the fractional Dirac operator are fractional harmonic. 

By straightforward calculation we have
\begin{equation}\label{DiracVector}
^C\hspace{-0.03cm}\mathcal D^{\vec{\alpha}}_{W} U= -{\rm Div}_W^{\vec{\alpha}}\vec{U}+{\rm Grad}_W^{\vec{\alpha}}U_{0}+{\rm Curl}_W^{\vec{\alpha}}\vec{U}.
\end{equation}

\section{Fractional physical systems}
In general, physical models can be formulated using the fractional derivatives, where the kernels are interpreted as power-law densities of states, and the fractional order of the derivative corresponds to the physical dimensions of the material \cite{Tarasov2008,Trujillo1}. Moreover, the nonlocality in time and space can be found in phenomena such as the electromagnetism \cite{Baleanu} and the diffusion \cite{Vazquez}.  

In this section we illustrate some examples where the quaternionic fractional approach emerges in linear hydrodynamic and elasticity. These fractional physical systems are motivated by \cite{Zabarankin}; however, the authors did not find in literature the use of quaternionic fractional approach to formulate such systems. 

Let a vector field $\vec{\Phi}=\vec{\Phi}(\vec{x})$ and a scalar field $\Psi_0=\Psi_0(\vec{x})$ related by
\begin{equation}\label{VF1}
\nabla_W^{\vec{\alpha}}\Psi_0+{\rm Curl}_W^{\vec{\alpha}}\vec{\Phi}+(\vec{B}\times \vec{\Phi})+\Psi_0 \vec{A}=0,\,\,\,{\rm Div}_W^{\vec{\alpha}}\vec{\Phi}+\vec{A}\cdot\vec{\Phi}=0, 
\end{equation}
where $\vec{A}, \vec{B}$ are constant real-valued vector and $\vec{x}$ is the position vector. 

For $\vec{A}=0$, (\ref{VF1}) is the generalized Moisil-Teodorescu system, see for instance \cite{Obolasvili}.  

\begin{Example}[{\bf Generalized Moisil-Teodorescu system}]
\begin{equation}\label{VF}
{\rm Curl}_W^{\vec{\alpha}}\vec{\Phi}+(\vec{B}\times \vec{\Phi})=-\nabla_W^{\vec{\alpha}}\Psi_0,\,\,\,{\rm Div}_W^{\vec{\alpha}}\vec{\Phi}=0. 
\end{equation}
Note that for $\vec{B}=0$, (\ref{VF}) is the Moisil-Theodorescu system whereas for $\Psi_0=0$ and $\vec{B}=0$, (\ref{VF}) simplifies to the classical potential flow equations, see for instance \cite{Zabarankin, Moisil, Bitsadze}.
\end{Example}

\begin{Example}[{\bf Moisil-Teodorescu system}]
\begin{equation}\label{Example1}
\begin{cases} 
\,\,\,\,\;\;{\rm Curl}\hspace{0.05cm}^{\vec{\alpha}}_W\vec{\Phi}+\nabla_W^{\vec{\alpha}} \Psi_0=0\cr\,\,\,\,\;\;{\rm Div}\hspace{0.05cm}^{\vec{\alpha}}_W\vec{\Phi}=0.\nonumber
\end{cases}
\end{equation}
\end{Example}

\begin{Example}[{\bf Ideal fluid}]
The velocity field $\vec{\Theta}$ of an ideal fluid is irrotational and incompressible (solenoidal), i.e.
\begin{equation*}\label{Example2}
{\rm Curl}\hspace{0.05cm}^{\vec{\alpha}}_W\vec{\Theta}=0,\,\,\,{\rm Div}\hspace{0.05cm}^{\vec{\alpha}}_W\vec{\Theta}=0, 
\end{equation*}
which corresponds to (\ref{VF}) with $\vec{\Phi}=\vec{\Theta}$, $\Psi_0\equiv 0$ and $\vec{B}=0$.
\end{Example}

\begin{Example}[{\bf Stokes flows}]
Under the assumption of negligible inertial and thermal effects, the time-independent velocity field $\vec{\Theta}$ of a viscous incompressible fluid is governed by the Stokes equations
\begin{equation}\label{Example3}
\mu_0\, ^C\hspace{-0.05cm}\Delta^{\vec{\alpha}}_{W}\vec{\Theta}=\nabla_W^{\vec{\alpha}}P_0,\,\,\,{\rm Div}\hspace{0.05cm}^{\vec{\alpha}}_W\vec{\Theta}=0, 
\end{equation}
where $P_0$ is the pressure in the fluid, $\mu_0$ is shear viscosity and $$^C\hspace{-0.05cm}\Delta^{\vec{\alpha}}_{W}\vec{\Theta}=\nabla_W^{\vec{\alpha}}{\rm Div}\hspace{0.05cm}^{\vec{\alpha}}_W\vec{\Theta}-{\rm Curl}\hspace{0.05cm}^{\vec{\alpha}}_W({\rm Curl}\hspace{0.05cm}^{\vec{\alpha}}_W\vec{\Theta}).$$
The equations (\ref{Example3}) imply that the vorticity $\vec{\Lambda}={\rm Curl}\hspace{0.05cm}^{\vec{\alpha}}_W\vec{\Theta}$ and pressure $P_0$ are related by
\begin{equation*}\label{Example3_1}
\mu_0\, {\rm Curl}\hspace{0.05cm}^{\vec{\alpha}}_W\vec{\Lambda}=-\nabla_W^{\vec{\alpha}}P_0,\,\,\,{\rm Div}\hspace{0.05cm}^{\vec{\alpha}}_W\vec{\Lambda}=0, 
\end{equation*}
which corresponds to (\ref{VF}) with $\Psi_0=P_0$, $\vec{\Phi}=\mu_0\vec{\Lambda}$ and $\vec{B}=0$.
\end{Example}

\begin{Example}[{\bf Oseen flows}]
Suppose a solid body translates with constant velocity $\vec{V}$ in a quiescent viscous incompressible fluid. If the Reynolds number is
sufficiently small, the time-independent velocity field $\vec{\Theta}$ with partially accounted inertial effects can be described by the Oseen equations
\begin{equation}\label{Example4}
\mu_0\, ^C\hspace{-0.05cm}\Delta^{\vec{\alpha}}_{W}\vec{\Theta}+\rho_0(\vec{V}\cdot\nabla_W^{\vec{\alpha}})\vec{\Theta}=\nabla_W^{\vec{\alpha}}P_0,\,\,\,{\rm Div}\hspace{0.05cm}^{\vec{\alpha}}_W\vec{\Theta}=0, 
\end{equation}
where $P_0$ is the pressure, $\mu_0$ and $\rho_0$ are fluid shear viscosity and density, respectively. Let $\vec{V}\cdot\vec{\Lambda}$=0 with $\vec{\Lambda}={\rm Curl}\hspace{0.05cm}^{\vec{\alpha}}_W\vec{\Theta}$. Then (\ref{Example4}) can be recast in two equivalent forms:
\begin{equation}\label{Example4_1}
{\rm Curl}\hspace{0.05cm}^{\vec{\alpha}}_W[\mu_0\vec{\Lambda}+\rho_0(\vec{V}\times\vec{\Theta})]=-\nabla_W^{\vec{\alpha}}P_0,\,\,\,{\rm Div}\hspace{0.05cm}^{\vec{\alpha}}_W[\mu_0\vec{\Lambda}+\rho_0(\vec{V}\times\vec{\Theta})]=0 
\end{equation}
and
\begin{equation}\label{Example4_2}
\mu_0\,{\rm Curl}\hspace{0.05cm}^{\vec{\alpha}}_W\vec{\Lambda}+\rho_0(\vec{V}\times\vec{\Lambda})=-\nabla_W^{\vec{\alpha}}[P_0-\rho_0(\vec{V}\cdot\vec{\Theta})],\,\,\,{\rm Div}\hspace{0.05cm}^{\vec{\alpha}}_W\vec{\Lambda}=0, 
\end{equation}
which are both particular cases of (\ref{VF}): $\Psi_0=P_0$, $\vec{\Phi}=\mu_0\vec{\Lambda}+\rho_0(\vec{V}\times\vec{\Theta})$ and $\vec{B}=0$ in (\ref{Example4_1}), and $\Psi_0=P_0-\rho_0(\vec{V}\cdot\vec{\Theta})$, $\vec{\Phi}=\mu_0 \vec{\Lambda}$ and $\vec{B}=\rho_0 \vec{V}/\mu_0$ in (\ref{Example4_2}). 
\end{Example}

\begin{Example}[{\bf Fractional Lam\'e-Navier system}]
A 3-dimensional field $\vec{U}$ in a homogeneous isotropic linear elastic material without volume forces is 
described by the Lam\'e-Navier system:
\begin{equation}\label{Lame-System}
\mathcal L_{\lambda,\mu}\vec{U}:=\mu\Delta\vec{U}+(\mu+\lambda){\rm grad}({\rm div}\vec{U})=0,
\end{equation}
where $\mu>0$, $\lambda>-\frac{2}{3}\mu$ are the Lam\'e coefficients, see \cite{Moreno1} for more details. 

The fractional calculus can be used to establish a fractional generalization of non-local elasticity in two forms: the fractional gradient elasticity theory (weak non-locality) and the fractional integral elasticity theory (strong non-locality), see \cite{Carpinteri, Aifantis, Aifantis2}.  

Many applications of fractional calculus amount to replacing the spacial derivative in an equation with a derivative of fractional order. So, we can consider a generalization of (\ref{Lame-System}) such that it includes derivatives of non-integer order. 

Precisely, we will propose the following transformations:
\begin{equation}\label{Fractionalization1}
\Delta\longrightarrow\Delta^{\vec{\alpha}}_{W},
\end{equation}

\begin{equation}\label{Fractionalization2}
{\rm grad}\longrightarrow{\rm Grad}_W^{\vec{\alpha}},
\end{equation}

\begin{equation}\label{Fractionalization3}
{\rm div}\longrightarrow{\rm Div}_W^{\vec{\alpha}}.
\end{equation}
Then, we get the fractional Lam\'e-Navier system associated with the transformations (\ref{Fractionalization1})-(\ref{Fractionalization3}) as follows:

\begin{equation}\label{Fractional Lame-System}
\mathcal L_{\lambda,\mu}^{\vec{\alpha}}\vec{U}:=\mu ^C\hspace{-0.05cm}\Delta^{\vec{\alpha}}_{W}\vec{U}+(\mu+\lambda){\rm Grad}_W^{\vec{\alpha}}({\rm Div}_W^{\vec{\alpha}}\vec{U})=0.
\end{equation}

Combining (\ref{divcurl}) with (\ref{DiracVector}) yields  
\begin{equation*}\label{Example5}
(^C\hspace{-0.03cm}\mathcal D^{\vec{\alpha}}_{W})^2\,\vec{U}=-{\rm Grad}_W^{\vec{\alpha}}({\rm Div}_W^{\vec{\alpha}}\vec{U})+{\rm Curl}\hspace{0.05cm}^{\vec{\alpha}}_W ({\rm Curl}\hspace{0.05cm}^{\vec{\alpha}}_W \vec{U}),
\end{equation*}

\begin{equation*}\label{Example5_1}
^C\hspace{-0.03cm}\mathcal D^{\vec{\alpha}}_{W}\,\vec{U}\,^C\hspace{-0.03cm}\mathcal D^{\vec{\alpha}}_{W}=-{\rm Grad}_W^{\vec{\alpha}}({\rm Div}_W^{\vec{\alpha}}\vec{U})-{\rm Curl}\hspace{0.05cm}^{\vec{\alpha}}_W ({\rm Curl}\hspace{0.05cm}^{\vec{\alpha}}_W \vec{U}),
\end{equation*}
and hence we have
\begin{equation*}\label{Example5_2}
{\rm Grad}_W^{\vec{\alpha}}{\rm Div}_W^{\vec{\alpha}}\vec{U}=-\frac{1}{2}\left[(^C\hspace{-0.03cm}\mathcal D^{\vec{\alpha}}_{W})^2\,\vec{U}+\,^C\hspace{-0.03cm}\mathcal D^{\vec{\alpha}}_{W}\vec{U}\,^C\hspace{-0.03cm}\mathcal D^{\vec{\alpha}}_{W}\right].
\end{equation*}
 Consequently, the fractional Lam\'e-Navier system (\ref{Fractional Lame-System}) can be rewritten in the form
\begin{equation*}\label{Fractional Lame-System2}
\frac{(\mu+\lambda)}{2}\,^C\hspace{-0.03cm}\mathcal D^{\vec{\alpha}}_{W}\,\vec{U}\,^C\hspace{-0.03cm}\mathcal D^{\vec{\alpha}}_{W}+\left(\mu+\frac{\mu+\lambda}{2}\right)\left(^C\hspace{-0.03cm}\mathcal D^{\vec{\alpha}}_{W}\right)^2\vec{U}=0.
\end{equation*}
Let us denote $\gamma=\displaystyle\frac{(\mu+\lambda)}{2}$, $\beta=\displaystyle\frac{(3\mu+\lambda)}{2}$ and introduce the operator

\begin{equation}\label{Fractional Lame-System3}
\mathcal L_{\lambda,\mu}^{\ast,\vec{\alpha}}\vec{U}:=\gamma\,^C\hspace{-0.03cm}\mathcal D^{\vec{\alpha}}_{W}\vec{U}\,^C\hspace{-0.03cm}\mathcal D^{\vec{\alpha}}_{W}+\beta(^C\hspace{-0.03cm}\mathcal D^{\vec{\alpha}}_{W})^2\,\vec{U}.
\end{equation} 
Having in mind the conditions relating $\lambda, \mu$ in (\ref{Lame-System}), it is easily seen that $\gamma\neq0$ and $\beta\neq0$.
\end{Example}

\begin{remark}
Note that the operational equation involves (\ref{Fractional Lame-System3}) is equivalent to the fractional Lam\'e-Navier system (\ref{Fractional Lame-System}). 
\end{remark}

\begin{remark}
Observe that, the term $^C\hspace{-0.03cm}\mathcal D^{\vec{\alpha}}_{W}\,\vec{U}\,^C\hspace{-0.03cm}\mathcal D^{\vec{\alpha}}_{W}$ in (\ref{Fractional Lame-System3}) is a generalization of the sandwich equation. Solutions of the sandwich equation ${\mathcal D} \vec{U}{\mathcal D}= 0$ are known as inframonogenic functions, see \cite{Moreno1} for more details.
In this way, the kernel of $^C\hspace{-0.03cm}\mathcal D^{\vec{\alpha}}_{W}\,\vec{U}\,^C\hspace{-0.03cm}\mathcal D^{\vec{\alpha}}_{W}$ could be understood as the set of fractional inframonogenic functions.
\end{remark}

\section{Fractional monochromatic Maxwell system}
The behavior of electric fields $(\E,\D)$, magnetic fields $(\B,\bf H)$, charge density $\rho(t,\vec{x})$, and current density ${\bf j}(t,\vec{x})$ is described by the Maxwell system, see \cite{NA48} and the references given there.

The relations between electric fields $(\E,\D)$ for the medium can be realized by the convolution
\begin{equation}\label{relacionED}
\D(t,\vec{x})=\varepsilon_0\int_{-\infty}^{+\infty}\varepsilon(\vec{x},\acute{\vec{x}})\E(t,\acute{\vec{x}})d\acute{\vec{x}},
\end{equation}
where $\varepsilon_0$ is the permittivity of free space. Homogeneity in space gives $\varepsilon(\vec{x},\acute{\vec{x}})=\varepsilon(\vec{x}-\acute{\vec{x}})$. A local case accords with the Dirac delta-function permittivity $\varepsilon(\vec{x})=\varepsilon\delta(\vec{x})$ and (\ref{relacionED}) yields $\D(t,\vec{x})=\varepsilon_0\varepsilon\E(t,\vec{x})$.

Analogously, we have a non-local equation for the magnetic fields $(\B,\bf H)$.

\subsection{Fractional non-local Maxwell system}
A feasible way of appearance of the Caputo derivative in the classical electrodynamics can be found in \cite{Tarasov2008}. This is mainly included here to keep the exposition self-contained.

If we have 
\begin{equation*}\label{relacionED1}
\D(t,x_1)=\int_{-\infty}^{+\infty}\varepsilon(x_1-\acute{x_1})\E(t,\acute{x_1})d\acute{x_1},
\end{equation*}        
then
\begin{eqnarray*}
\partial^{1}_{x_1}\D(t,x_1)&=&\int_{-\infty}^{+\infty}\partial^{1}_{x_1}\varepsilon(x_1-\acute{x_1})\E(t,\acute{x_1})d\acute{x_1}\\&=&-\int_{-\infty}^{+\infty}\partial^{1}_{\acute{x_1}}\varepsilon(x_1-\acute{x_1})\E(t,\acute{x_1})d\acute{x_1}.
\end{eqnarray*}

The integration by parts now leads to
\begin{equation}\label{relacionED2}
\partial^{1}_{x_1}\D(t,x_1)=\int_{-\infty}^{+\infty}\varepsilon(x_1-\acute{x_1})\partial^{1}_{\acute{x_1}}\E(t,\acute{x_1})d\acute{x_1}.
\end{equation} 
The non-local properties of electrodynamics can be considered as a result of dipole-dipole interactions with a fractional power-law screening that is connected with the integro-differentiation of non-integer order, see \cite{Tarasov2008_2}.

Consider the kernel $\varepsilon(x_1-\acute{x_1})$ of (\ref{relacionED2}) in $(0,x_1)$ such that
\begin{equation*}\label{kernel}
\varepsilon(x_1-\acute{x_1}) =
\begin{cases} 
e(x_1-\acute{x_1}),\,\,0<\acute{x_1}<x_1,\cr 0,\;\;\;\;\;\;\;\;\;\;\;\;\acute{x_1}>x_1,\,\,\, \acute{x_1}<0,
\end{cases}
\end{equation*} 
with the power-like function
\begin{equation*}\label{kernel1}
e(x_1-\acute{x_1}) = \frac{1}{\Gamma(1-\alpha_1)}\frac{1}{(x_1-\acute{x_1})^{\alpha_1}}, \,\,\,\,\, (0<\alpha_1<1).
\end{equation*} 
Then (\ref{relacionED2}) gives the relation 
\begin{equation*}\label{relacionfraccED}
\partial^{1}_{x_1}\D(t,x_1)= _0^C\hspace{-0.15cm}D^{\alpha_1}_{x_1}\,\E(t,x_1), \,\,\,\,\, (0<\alpha_1<1),
\end{equation*}
with the Caputo fractional derivatives $_0^C\hspace{-0.05cm}D^{\alpha_1}_{x_1}$.

Let us apply (\ref{Divergencia}) and (\ref{Rotacional}) to write the corresponding fractional non-local differential Maxwell system as
\begin{equation}\label{m3}
\begin{cases} 
\,\,\,\,\;\;\mbox{Div}_W^{\vec{\alpha}}\E(t,\vec{x})=g_1\rho(t,\vec{x})\cr\,\,\,\,\;\;\mbox{Curl}_W^{\vec{\alpha}}\E(t,\vec{x})=-\partial^{1}_{t} \B(t,\vec{x})\cr \,\,\,\,\;\;\mbox{Div}_W^{\vec{\alpha}}\B(t,\vec{x})=0\cr \,\,\,\,\;\;g_2\mbox{Curl}_W^{\vec{\alpha}}\B(t,\vec{x})={\bf j}(t,\vec{x}) + g_{3}^{-1}\partial^{1}_{t}\E(t,\vec{x}),
\end{cases}
\end{equation}
where $g_1$, $g_2$, $g_3$ are constants. We assume that the densities $\rho(t,\vec{x})$ and ${\bf j}(t,\vec{x})$, which describe the external sources of the electromagnetic field, are given.
 
The main idea behind the use of fractional differentiation, for describing real-world problems, is their abilities to describe non-local distributed effects. For example, a power-law long-range interaction in the 3-dimensional lattice in the continuous limit can give a fractional equation, see \cite{Tarasov2006, Tarasov2006_2}. In \cite{Aguilar}, some numerical examples and simulations are provided to illustrate the use of alternative fractional differential equations for modeling the electrical circuits. 

Also, the methodology used in \cite{Aguilar} succeed in the analysis of electromagnetic transients problems in electrical systems. Moreover, an empirical model for complex permittivity was incorporated into Maxwell’s equations that lead to the appearance of fractional order derivatives in Ampere’s Law and the wave equation, see \cite{Wharmby}. 

The fractional Maxwell system (\ref{m3}) can describe electromagnetic fields in media with fractional non-local properties, like in superconductor and semi-conductor physics \cite{Belleguie, Genchev} and in accelerated systems \cite{Mashhoon}.

\subsection{Fractional monochromatic Maxwell system}
We will assume that the electromagnetic characteristics of the medium do not change in time. If in addition they have the same values in each point of the cube $W\in\R^3,$ then the medium which fills the volume is called homogeneous.   

A monochromatic electromagnetic field has the following form
\begin{equation}\label{E}
\E(t,\vec{x})= \mbox{Re}(\vec{E}(\vec{x})e^{-i\omega t})
\end{equation}
and 

\begin{equation}\label{B}
{\bf B}(t,\vec{x})= \mbox{Re}(\vec{B}(\vec{x})e^{-i\omega t}),
\end{equation}
where $\vec{E}: W\longrightarrow {\mathbb R^3}$, $\vec{B}: W\longrightarrow {\mathbb R^3}$ and all dependence on time is contained in the factor $e^{-i\omega t}$. 

$\vec{E}$ and $\vec{B}$ are complex vectors called the complex amplitudes of the electromagnetic field; $\omega$  is the frequency of oscillations.

Substituting (\ref{E}) and (\ref{B}) into (\ref{m3}) we obtain the equations for the complex amplitudes $\vec{E}$ and $\vec{B}$:
\begin{equation}\label{m5}
\begin{cases} 
\,\,\,\,\;\;\mbox{Div}_W^{\vec{\alpha}}\vec{E}=g_1\rho\cr\,\,\,\,\;\;\mbox{Curl}_W^{\vec{\alpha}}\vec{E}=i \omega\vec{B}\cr \,\,\,\,\;\;\mbox{Div}_W^{\vec{\alpha}}\vec{B}=0\cr \,\,\,\,\;\;\mbox{Curl}_W^{\vec{\alpha}}\vec{B}=-i\omega g^{-1}_2g^{-1}_3\vec{E}+g^{-1}_2\vec{j}.
\end{cases}
\end{equation}
The quantities $\rho$ and ${\bf j}$ are also assumed to be monochromatic $\rho(\vec{x},t)= \mbox{Re}(\rho(\vec{x})e^{-i\omega t})$, ${\bf j}(\vec{x},t)= \mbox{Re}(\vec{j}(\vec{x})e^{-i\omega t})$.

\subsection{Fractional Helmholtz operator}
The following fractional wave equations can be found in \cite{Tarasov2008}. Using the fractional non-local Maxwell system with ${\bf j}=0$ and $\rho=0$, are obtained the wave equations for electric and magnetic fields in a $W$ cube. 
\begin{equation}\label{wave1}
\begin{cases} 
\,\,\,\,\;\;\displaystyle\frac{1}{\nu^{2}}\partial^{2}_{t}{\bf B}- ^C\hspace{-0.2cm}\Delta^{\vec{\alpha}}_{W}{\bf B}=0\cr\,\,\,\,\;\;\displaystyle\frac{1}{\nu^{2}}\partial^{2}_{t}{\bf E}- ^C\hspace{-0.2cm}\Delta^{\vec{\alpha}}_{W}{\bf E}=0,
\end{cases}
\end{equation}
where $\nu^{2}=g_{2}g_{3}$.

Substituting (\ref{E}) and (\ref{B}) into (\ref{wave1}), we obtain that $\vec{B}$, $\vec{E}$ are also solutions of homogeneous fractional Helmholtz equations with respect to the square of the medium wave number $\displaystyle\frac{\omega}{\nu}$.
\begin{equation}\label{Helmholtz1}
\begin{cases} 
\,\,\,\,\;\;\displaystyle\frac{\omega^2}{\nu^{2}}\vec{B}- ^C\hspace{-0.2cm}\Delta^{\vec{\alpha}}_{W}\vec{B}=0\cr\,\,\,\,\;\;\displaystyle\frac{\omega^2}{\nu^{2}}\vec{E}- ^C\hspace{-0.2cm}\Delta^{\vec{\alpha}}_{W}\vec{E}=0.
\end{cases}
\end{equation}
The above fractional Helmholtz equations motivate the introduction of the fractional Helmholtz operator $^C\hspace{-0.05cm}\Delta^{\vec{\alpha}}_{W}+\kappa^2$ $\left(\kappa=\displaystyle\frac{\omega}{\nu}\in\mathbb C\right)$.

Next, the fractional Helmholtz operator, can be factorized as
\begin{equation}\label{facfracc}
-\left(^C\hspace{-0.05cm}\mathcal D^{\vec{\alpha}}_{W}-\kappa\right)\left(^C\hspace{-0.05cm}\mathcal D^{\vec{\alpha}}_{W}+\kappa\right)=^C\hspace{-0.2cm}\Delta^{\vec{\alpha}}_{W}+\kappa^2,
\end{equation}
which is a corollary of (\ref{Factorizacion}).

The formulation of (\ref{m5}) and (\ref{Helmholtz1}) in terms of the fractional Dirac operator $^C\hspace{-0.03cm}\mathcal D^{\vec{\alpha}}_{W}= -{\rm Div}_W^{\vec{\alpha}}+{\rm Curl}_W^{\vec{\alpha}}$ allows us to describe solutions of both systems in terms of displacements $^C\hspace{-0.03cm}\mathcal D^{\vec{\alpha}}_{W}\mp\kappa$. 

\begin{remark}
In view of the factorization (\ref{facfracc}) of the fractional Helmholtz operator, we can express the solutions $\vec{E}$ and $\vec{B}$ of (\ref{Helmholtz1}) in terms of the function $\vec{F}=\left(^C\hspace{-0.05cm}\mathcal D^{\vec{\alpha}}_{W}+\kappa\right)[\vec{E}+i\vec{B}]$. That function belongs to $\mbox{ker}\,\left(^C\hspace{-0.05cm}\mathcal D^{\vec{\alpha}}_{W}-\kappa\right)$ and in turn allows us to re-express the electric and magnetic components for (\ref{m5}). 
\end{remark}

Applying the fractional divergence operator to the last equation in (\ref{m5}) and using (\ref{divcurl}), we find the relation between $\rho$ and $\vec{j}$:
\begin{equation}\label{relation}
\mbox{Div}_W^{\vec{\alpha}}\vec{j}-i\omega\rho g_1g^{-1}_{3}=0.
\end{equation}
In order to rewrite (\ref{m5}) in quaternionic form, let us denote the wave number $\kappa:=\omega\sqrt{g^{-1}_2g^{-1}_3}$, where the square root is chosen so that $\mbox{Im}\,\kappa\geq0$.

Introduce the following pair of purely vector biquaternionic functions 
\begin{equation}\label{BQF1}
\vec{\varphi}:=-i\omega g^{-1}_2g^{-1}_3\vec{E}+\kappa\vec{B}, 
\end{equation}

\begin{equation}\label{BQF2}
\vec{\psi}:=i\omega g^{-1}_2g^{-1}_3\vec{E}+\kappa\vec{B}
\end{equation}
and the notation
\begin{equation*}\label{FPDO}
^C\hspace{-0.03cm}\mathcal D^{\vec{\alpha}}_{W,\kappa}:=^C\hspace{-0.15cm}\mathcal D^{\vec{\alpha}}_{W}+\kappa.
\end{equation*}
Now, we formulate the main result of this paper, which consist of a biquaternionic reformulation of a fractional monochromatic Maxwell system.   
\begin{Theorem}
The fractional quaternionic equations
\begin{equation}\label{equivalent_equation1}
^C\hspace{-0.03cm}\mathcal D^{\vec{\alpha}}_{W,-\kappa}\vec{\varphi}=g^{-1}_2\left({\rm Div}_W^{\vec{\alpha}}\vec{j}+\kappa \vec{j}\right),
\end{equation}
\begin{equation}\label{equivalent_equation2}
^C\hspace{-0.03cm}\mathcal D^{\vec{\alpha}}_{W,\kappa}\vec{\psi}=g^{-1}_2\left(-{\rm Div}_W^{\vec{\alpha}}\vec{j}+\kappa \vec{j}\right).
\end{equation} 
are equivalent to the fractional Maxwell system (\ref{m5}). Indeed, $\vec{\varphi}$ and $\vec{\psi}$ are solutions of (\ref{equivalent_equation1}) and (\ref{equivalent_equation2}) respectively, if and only if $\vec{E}$ and $\vec{B}$ are solutions of (\ref{m5}).  
\end{Theorem}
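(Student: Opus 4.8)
The plan is to prove both implications by splitting the (purely vectorial) biquaternionic equations (\ref{equivalent_equation1})--(\ref{equivalent_equation2}) into their scalar and vector parts and matching these against the four equations of (\ref{m5}). The only structural facts needed are: (i) by (\ref{DiracVector}), for a purely vectorial $\vec{U}$ one has $^C\hspace{-0.03cm}\mathcal D^{\vec{\alpha}}_{W}\vec{U}=-{\rm Div}_W^{\vec{\alpha}}\vec{U}+{\rm Curl}_W^{\vec{\alpha}}\vec{U}$, so its scalar part is $-{\rm Div}_W^{\vec{\alpha}}\vec{U}$ and its vector part is ${\rm Curl}_W^{\vec{\alpha}}\vec{U}$; (ii) the linear relations $\vec{\varphi}+\vec{\psi}=2\kappa\vec{B}$ and $\vec{\psi}-\vec{\varphi}=2i\omega g_2^{-1}g_3^{-1}\vec{E}$ that follow from (\ref{BQF1})--(\ref{BQF2}); and (iii) the identity $\kappa^{2}=\omega^{2}g_2^{-1}g_3^{-1}$ built into the definition of $\kappa$.

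\emph{Direct implication.} Assuming $\vec{E},\vec{B}$ solve (\ref{m5}), they also satisfy the continuity relation (\ref{relation}) (apply ${\rm Div}_W^{\vec{\alpha}}$ to the last equation of (\ref{m5}) and use (\ref{divcurl})). First I would compute, by linearity and substitution of (\ref{m5}), that ${\rm Div}_W^{\vec{\alpha}}\vec{\varphi}=-i\omega g_1 g_2^{-1}g_3^{-1}\rho$ and --- this is the crux --- that ${\rm Curl}_W^{\vec{\alpha}}\vec{\varphi}=\kappa\vec{\varphi}+\kappa g_2^{-1}\vec{j}$, the first summand arising precisely because $\omega^{2}g_2^{-1}g_3^{-1}=\kappa^{2}$. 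Then $^C\hspace{-0.03cm}\mathcal D^{\vec{\alpha}}_{W,-\kappa}\vec{\varphi}=-{\rm Div}_W^{\vec{\alpha}}\vec{\varphi}+{\rm Curl}_W^{\vec{\alpha}}\vec{\varphi}-\kappa\vec{\varphi}$: the displacement term $-\kappa\vec{\varphi}$ cancels the $\kappa\vec{\varphi}$ coming from the curl, leaving vector part $\kappa g_2^{-1}\vec{j}$ and scalar part $i\omega g_1 g_2^{-1}g_3^{-1}\rho$, and (\ref{relation}) identifies the latter with $g_2^{-1}{\rm Div}_W^{\vec{\alpha}}\vec{j}$; this is exactly (\ref{equivalent_equation1}). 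The computation for $\vec{\psi}$ is entirely parallel, now with ${\rm Curl}_W^{\vec{\alpha}}\vec{\psi}=-\kappa\vec{\psi}+\kappa g_2^{-1}\vec{j}$, so the $+\kappa\vec{\psi}$ displacement cancels, yielding (\ref{equivalent_equation2}).

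\emph{Converse.} Assume (\ref{equivalent_equation1})--(\ref{equivalent_equation2}) hold, where the given sources $\rho,\vec{j}$ are understood to satisfy the compatibility condition (\ref{relation}) --- this is forced, since $\rho$ is absent from (\ref{equivalent_equation1})--(\ref{equivalent_equation2}) but is tied to ${\rm Div}_W^{\vec{\alpha}}\vec{E}$ by the first equation of (\ref{m5}). Since $\kappa\neq0$ and $\omega\neq0$, invert (\ref{BQF1})--(\ref{BQF2}) to define $\vec{B}:=\tfrac{1}{2\kappa}(\vec{\varphi}+\vec{\psi})$ and $\vec{E}:=-\tfrac{i g_2 g_3}{2\omega}(\vec{\psi}-\vec{\varphi})$, both purely vectorial. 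Splitting (\ref{equivalent_equation1})--(\ref{equivalent_equation2}) via (\ref{DiracVector}) produces the four identities ${\rm Div}_W^{\vec{\alpha}}\vec{\varphi}=-g_2^{-1}{\rm Div}_W^{\vec{\alpha}}\vec{j}$, ${\rm Div}_W^{\vec{\alpha}}\vec{\psi}=g_2^{-1}{\rm Div}_W^{\vec{\alpha}}\vec{j}$, ${\rm Curl}_W^{\vec{\alpha}}\vec{\varphi}=\kappa\vec{\varphi}+\kappa g_2^{-1}\vec{j}$, ${\rm Curl}_W^{\vec{\alpha}}\vec{\psi}=-\kappa\vec{\psi}+\kappa g_2^{-1}\vec{j}$. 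Adding and subtracting the two scalar identities yields ${\rm Div}_W^{\vec{\alpha}}\vec{B}=0$ and ${\rm Div}_W^{\vec{\alpha}}\vec{E}=-i g_3\omega^{-1}{\rm Div}_W^{\vec{\alpha}}\vec{j}$, which equals $g_1\rho$ by (\ref{relation}) --- the third and first equations of (\ref{m5}). Adding the two vector identities and inserting $\vec{\varphi}-\vec{\psi}=-2i\omega g_2^{-1}g_3^{-1}\vec{E}$ gives ${\rm Curl}_W^{\vec{\alpha}}\vec{B}=-i\omega g_2^{-1}g_3^{-1}\vec{E}+g_2^{-1}\vec{j}$; subtracting them and inserting $\vec{\varphi}+\vec{\psi}=2\kappa\vec{B}$ together with $g_2 g_3\kappa^{2}=\omega^{2}$ gives ${\rm Curl}_W^{\vec{\alpha}}\vec{E}=i\omega\vec{B}$ --- the fourth and second equations. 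This closes the equivalence.

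\emph{Anticipated difficulty.} The argument is bookkeeping rather than conceptual; the one genuinely delicate point is the status of the continuity relation (\ref{relation}): it is automatic in the direct implication but must be imposed on the source data in the converse, so the honest statement is an equivalence modulo (\ref{relation}), and I expect the write-up to flag this. A secondary nuisance is keeping the constants $g_1,g_2,g_3,\omega,\kappa$ and the scalar imaginary unit $i$ (as opposed to the quaternion units) straight, so that the cancellation of the displacement terms $\pm\kappa\vec{\varphi},\ \pm\kappa\vec{\psi}$ against the curl terms is manifestly correct.
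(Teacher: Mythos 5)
Your proof is correct and follows essentially the same route as the paper's: both rest on the decomposition (\ref{DiracVector}) of the fractional Dirac operator into $-{\rm Div}_W^{\vec{\alpha}}+{\rm Curl}_W^{\vec{\alpha}}$, the identity $\kappa^{2}=\omega^{2}g_2^{-1}g_3^{-1}$, and the continuity relation (\ref{relation}); the only difference is that you separate scalar and vector parts at the outset, while the paper carries the quaternionic operator through equations (\ref{D1})--(\ref{D2}) and separates parts only at the end. Your converse, which adds and subtracts the four split identities to isolate $\vec{E}$ and $\vec{B}$, is if anything more explicit than the paper's, and your observation that (\ref{relation}) must be imposed on the given sources in the converse direction correctly identifies what the paper leaves implicit.
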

\begin{proof}
Let $\vec{E}$ and $\vec{B}$ solutions of (\ref{m5}), which can be rewritten as two quaternionic equations
\begin{equation}\label{D1}
^C\hspace{-0.03cm}\mathcal D^{\vec{\alpha}}_{W}\vec{E}=i\omega\vec{B}-g_1\rho,  
\end{equation}
\begin{equation}\label{D2}
^C\hspace{-0.03cm}\mathcal D^{\vec{\alpha}}_{W}\vec{B}=-i\omega g^{-1}_2g^{-1}_3\vec{E}+g^{-1}_2\vec{j}.  
\end{equation}
Applying $^C\hspace{-0.03cm}\mathcal D^{\vec{\alpha}}_{W}$ to $\vec{\varphi}$ in (\ref{BQF1}) and combining (\ref{D1}) with (\ref{D2}) we get 
\begin{eqnarray}\label{Dphi}
^C\hspace{-0.03cm}\mathcal D^{\vec{\alpha}}_{W}\vec{\varphi}&=&-g^{-1}_{2}g^{-1}_{3}i\omega\, ^C\hspace{-0.03cm}\mathcal D^{\vec{\alpha}}_{W}\vec{E}+\kappa \,^C\hspace{-0.03cm}\mathcal D^{\vec{\alpha}}_{W}\vec{B}\nonumber\\
&=&\kappa^2\vec{B}+g_1g^{-1}_{2}g^{-1}_{3}i\omega\rho-\kappa i\omega g^{-1}_2g^{-1}_3\vec{E}+\kappa g^{-1}_2\vec{j}\nonumber\\
&=&\kappa\vec{\varphi}+g_1g^{-1}_{2}g^{-1}_{3}i\omega\rho+\kappa g^{-1}_2\vec{j}\nonumber. 
\end{eqnarray}
Thus, (\ref{relation}) shows that $\vec{\varphi}$ satisfies (\ref{equivalent_equation1}). Analogously we can assert that $\vec{\psi}$ in (\ref{BQF2}) satisfies (\ref{equivalent_equation2}).

On the contrary, suppose that $\vec{\varphi}$ and $\vec{\psi}$ satisfies (\ref{equivalent_equation1}) and (\ref{equivalent_equation2}) respectively. A trivial verification shows that 
\begin{equation}\label{Dphi2}
^C\hspace{-0.03cm}\mathcal D^{\vec{\alpha}}_{W}\vec{\varphi}=\kappa\vec{\varphi}+g_1g^{-1}_{2}g^{-1}_{3}i\omega\rho+\kappa g^{-1}_2\vec{j}.
\end{equation}

Substituting (\ref{BQF1}) into (\ref{Dphi2})
\begin{eqnarray*}\label{Dphi3}
-g^{-1}_{2}g^{-1}_{3}i\omega\, ^C\hspace{-0.03cm}\mathcal D^{\vec{\alpha}}_{W}\vec{E}+\kappa \,^C\hspace{-0.03cm}\mathcal D^{\vec{\alpha}}_{W}\vec{B}\nonumber=\kappa^2\vec{B}+g_1g^{-1}_{2}g^{-1}_{3}i\omega\rho-\kappa i\omega g^{-1}_2g^{-1}_3 \vec{E}+\kappa g^{-1}_2 \vec{j}\\=-i^2\omega^2 g^{-1}_2 g^{-1}_3\vec{B}+g_1g^{-1}_{2}g^{-1}_{3}i\omega\rho-\kappa i\omega g^{-1}_2g^{-1}_3\vec{E}+\kappa g^{-1}_2 \vec{j}\nonumber\\=-i\omega g^{-1}_2 g^{-1}_3(i\omega\vec{B}-g_1\rho)+\kappa(-i\omega g^{-1}_2g^{-1}_3\vec{E}+g^{-1}_2\vec{j}).
\end{eqnarray*}

From the last equality, we conclude that (\ref{D1}) holds. Similar considerations apply to $\vec{\psi}$ in order to obtain (\ref{D2}).

Separating the vector and scalar parts in (\ref{D1}) and (\ref{D2}), together with the vectorial nature of $\vec{\varphi}$, $\vec{\psi}$ and (\ref{DiracVector}) yields (\ref{m5}). This completes the proof.
\end{proof}

\section{Conclusions}
The main purpose of this paper was to explored the very close connection between the $3$-parameter quaternionic displaced fractional Dirac operator with a fractional monochromatic Maxwell system using Caputo derivatives. With this aim in mind, a biquaternionic reformulation of such a system was studied. Moreover, some examples to illustrate how the quaternionic fractional approach emerges in linear hydrodynamic and elasticity are given. As future works, the formulation of a fractional inframonogenic functions theory is suggested.

\subsection*{Data Availability}
No data were used to support this study.

\subsection*{Conflicts of Interest}
The authors declare that they have no conflicts of interest.

\subsection*{Funding Statement}
Juan Bory-Reyes was partially supported by Instituto Polit\'ecnico Nacional in the framework of SIP programs (SIP20195662).

\subsection*{Acknowledgements}
Yudier Pe\~na-P\'erez gratefully acknowledges the financial support of the Postgraduate Study Fellowship of the Consejo Nacional de Ciencia y Tecnolog\'ia (CONACYT) (grant number 744134).

\subsection*{Authors Contributions} 
All authors contributed equally to this work. All authors read and approved the final manuscript.

\end{document}